\documentclass[11pt,reqno]{amsart}

\usepackage{amsmath, amsthm, amstext, amssymb, amsfonts, graphicx, bm, subfigure}
\textheight=8.5truein \textwidth=6.25truein \hoffset=-.75truein
\voffset=-.5truein \footskip=24pt

\newcommand{\Cov}{\text{Cov}}
\newcommand{\E}{\text{E}}
\newcommand{\PR}{\text{P}}
\newcommand{\overlim}[1]{{\buildrel{#1}\over\rightarrow\;}}

\newtheorem{theorem}{Theorem}[section]

\newtheorem*{remark}{Remark}

\begin{document}

\title{A Markov chain approach to
renormalization group transformations}

\author{Mei Yin}
\address{Department of Mathematics, University of Texas, Austin,
TX, 78712, USA} \email{myin@math.utexas.edu}

\dedicatory{\rm December 6, 2012}

\begin{abstract}
We aim at an explicit characterization of the renormalized
Hamiltonian after decimation transformation of a one-dimensional
Ising-type Hamiltonian with a nearest-neighbor interaction and a
magnetic field term. To facilitate a deeper understanding of the
decimation effect, we translate the renormalization flow on the
Ising Hamiltonian into a flow on the associated Markov chains
through the Markov-Gibbs equivalence. Two different methods are
used to verify the well-known conjecture that the eigenvalues of
the linearization of this renormalization transformation about the
fixed point bear important information about all six of the
critical exponents. This illustrates the universality property of
the renormalization group map in this case.

\vskip.1truein

\noindent{\it Keywords\/}: Markov-Gibbs equivalence,
renormalization, universality
\end{abstract}

\maketitle

\section{Introduction}
\label{intro} The discovery of the equivalence of Markov random
fields and Gibbs random fields was a major breakthrough in the
interchange of ideas between probability and physics. A Markov
random field is a natural generalization of the familiar concept
of a Markov chain, which is a collection of random variables with
the property that, given the present, the future is
(conditionally) independent of the past. If we look at the chain
itself as a very simple graph and ignore the directionality
implied by ``time'', then a Markov chain may alternatively be
viewed as a chain graph of stochastic variables, where each
variable is independent of all other variables (both future and
past) given its two neighbors. A Markov random field is the same
thing, only that rather than a chain graph, we allow the
relationship between the variables to be defined by any graph
structure, and each variable is independent of all the others
given its neighbors in the graph. A Gibbs random field, on the
other hand, is formed by a set of random variables whose
configurations obey a Gibbs distribution, which is a probability
distribution that factorizes over all possible cliques, i.e.
complete subgraphs in the graph, and the factors are conveniently
referred to as ``clique potentials''. These two ways of defining a
random configuration are apparently quite different
\cite{Honerkamp}: A Markov random field is characterized by its
local property (the Markovianity) whereas a Gibbs random field is
specified by its global property (the Gibbs distribution).

The rigorous study of the relationship between these two seemingly
unrelated fields was initiated by Dobrushin \cite{Dobrushin} in
the context of statistical physics, who considered the questions
of existence and uniqueness of a random field subject to a
Markovian conditional distribution. Further investigations quickly
ensued. Averintsev \cite{Averintsev} and Spitzer \cite{Spitzer}
independently proved that the class of two-state Markov chains is
identical to the class of Gibbs ensembles on the simple cubic
lattice. Hammersley and Clifford \cite{HC} showed that the same
equivalence holds between a multi-state Markov field and a
generalized Gibbs ensemble over an arbitrary finite graph. The
celebrated Hammersley-Clifford theorem states that each Markov
field with a system of neighbors and the associated system of
cliques is also a Gibbs field with the same system of cliques, and
vice versa, each Gibbs field is also a Markov field with the
corresponding system of neighbors. This implies that the joint
probability and the conditional probability can specify each
other, and serves as a theoretical basis for many modeling
applications, where the global characteristic is captured and
represented through a set of tractable local characteristics. The
original method of proof, however, did not have great intuitive
appeal, and many alternative proofs of this theorem were
developed. Sherman \cite{Sherman} verified the equivalence of
Markov fields and Gibbs ensembles under more relaxed conditions by
the repeated use of the inclusion-exclusion principle. Preston
\cite{Preston} adopted a direct approach to the two-state problem
and presented an explicit formula for the pair potential. Grimmett
\cite{Grimmett} showed that the equivalence of structure follows
immediately from an application of the M\"{o}bius inversion
theorem. A final improvement was done by Besag \cite{Besag}, who
applied methods of statistical analysis and gave a much simpler,
analytical proof of the general result.

The nearest-neighbour Ising model in one dimension is commonly
used to demonstrate the powerful Markov-Gibbs equivalence. Though
an ordered phase only emerges at zero temperature, this classic
model is physically important in that it has a fixed point (the
so-called ``zero temperature phase transition'') where the
critical exponents may be sensibly defined as in higher
dimensions. There is the astonishing empirical fact that these
critical exponents depend only on overall features of the system,
and are related to eigenvalues of the linearized renormalization
group map near the fixed point \cite{Wilson}. This universality
conjecture has generated continued interest in the scientific
community, and various approaches to the renormalization effect on
the one-dimensional Ising model have been explored \cite{Nelson,
Nauenberg}

Consider a one-dimensional Ising model with $N$ spins
$\sigma_i=\pm 1$, labelled successively $i=0,...,N-1$. We take the
system size $N$ to be very large (strictly speaking, infinite).
The Gibbs field of this model is described by a Hamiltonian $H$,
consisting of a nearest-neighbor interaction $J$ and a magnetic
field term $m$:
\begin{equation}
\label{H} H=-\left(J\sum_{i=0}^{N-1}
\sigma_i\sigma_{i+1}+m\sum_{i=0}^{N-1} \sigma_i\right),
\end{equation}
where periodic boundary condition is imposed so that
$\sigma_N=\sigma_0$, a standard setup to ensure that $H$ is
translation-invariant. We focus on a specific renormalization
group transformation, namely decimation transformation with
blocking factor $b$. To avoid unnecessary technicalities, we
assume that $b$ divides $N$. The decimation procedure is
straightforward: Fix the spins $\sigma_{bi}$ for $i=0,...,N/b-1$,
and integrate out the remaining ones. This will generate a
renormalized Gibbs field with a Hamiltonian $H'$ having the same
form as the original Hamiltonian $H$, but containing a
nearest-neighbor interaction $J'$ and a magnetic field term $m'$:
\begin{equation}
\label{H'} H'=-\left(J'\sum_{i=0}^{N/b-1}
\sigma_{bi}\sigma_{b(i+1)}+m'\sum_{i=0}^{N/b-1}
\sigma_{bi}\right).
\end{equation}
The renormalized spin coefficients $(J', m')$ and the original
spin coefficients $(J, m)$ are related by the decimation map:
\begin{equation*}
\exp\left(C+J'\sigma_0\sigma_b+\frac{m'}{2}\left(\sigma_0+\sigma_{b}\right)\right)
\end{equation*}
\begin{equation}
\label{RG}
=\sum_{\sigma_1,...,\sigma_{b-1}}\exp\left(J\sum_{i=0}^{b-1}\sigma_i\sigma_{i+1}+\frac{m}{2}\sum_{i=0}^{b-1}
\left(\sigma_i+\sigma_{i+1}\right)\right),
\end{equation}
where $C$ is a normalization constant. Notice that to avoid double
counting, we have assigned a ``half'' of the magnetic field $m$
($m'$) to each spin.

We would like to obtain an explicit characterization of the
renormalized model, but as the blocking factor $b$ gets large,
solving for $(J', m')$ directly from (\ref{RG}) becomes very
difficult. We thus take an alternative approach and investigate
the decimation effect on the associated Markov chains. As there is
no finite phase transition in one dimension, we follow the common
practice and measure the nearest-neighbor interaction strength $J$
($J'$) by the Boltzmann factor $k=e^{-2J}$ ($k'=e^{-2J'}$)
instead. An explicit solution for $(k', m')$ then follows from the
Markov-Gibbs equivalence (Hammersley-Clifford theorem). The
diagram below illustrates these ideas:

\begin{equation}
\label{diagram}
\begin{array}{ccc}
  \text{original Hamiltonian} & \overlim{(\text{I})} & \text{renormalized Hamiltonian} \\
  \downarrow_{(\text{II})} &  & \uparrow_{(\text{IV})} \\
  \text{original Markov chain} & \overlim{(\text{III})} & \text{renormalized Markov chain} \\
\end{array}
\end{equation}
where:
\begin{itemize}
\item (II) and (IV) indicate the Markov-Gibbs equivalence
(Hammersley-Clifford theorem).

\item (I) is the decimation map on the Ising Hamiltonian (cf.
(\ref{RG})).

\item (III) is the decimation map on the associated Markov chains
(to be examined).
\end{itemize}

A key tenet of the renormalization group is its explanation of
universality \cite{vanEnter}. Thus we would also like to verify
the widely-believed universality conjecture in this special case,
which states that the linearization of the decimation
transformation with blocking factor $b$ about the two-dimensional
fixed point ($k=m=0$) has two real eigenvalues $b^{y_T}$ and
$b^{y_H}$, where $y_T=y_H=1$. Suppose we start with a Hamiltonian
that is close to critical. The decimation map will first drive it
towards the fixed point for a large number of iterations, but
eventually will drive it away. The singular behavior of the model
arises from iterating the map infinitely many times, and the
critical properties are determined by how much time the
Hamiltonian spends near the fixed point, when its behavior is
governed by the linearization. In fact, it is observed that there
are exact non-trivial relations between the six critical exponents
(specific heat $\alpha$, spontaneous magnetization $\beta$,
magnetic susceptibility $\gamma$, response to magnetic field at
zero temperature $\delta$, correlation length $\nu$, and
correlation function at zero temperature $\eta$) and the two
eigenvalues (more precisely $y_T$ and $y_H$) of the linearization:
\begin{equation}
\alpha=2-\frac{d}{y_T}=1, \hspace{1cm} \beta=\frac{d-y_H}{y_T}=0,
\hspace{1cm} \gamma=\frac{2y_H-d}{y_T}=1,
\end{equation}
\begin{equation}
\delta=\frac{y_H}{d-y_H}=\infty, \hspace{1cm} \nu=\frac{1}{y_T}=1,
\hspace{1cm} \eta=d+2-2y_H=1.
\end{equation}
(More discussions may be found in \cite{Nelson} and
\cite{Baxter}.)

\begin{theorem}[Universality Conjecture]
\label{prop} At the fixed point ($k=m=0$), the Jacobian matrix of
the renormalized spin coefficients $(k', m')$ with respect to the
original spin coefficients $(k, m)$ is given by
\begin{eqnarray}
\label{J}
\textsf{Jac}=\left(%
\begin{array}{cc}
  \frac{\partial k'}{\partial k} & \frac{\partial k'}{\partial m} \\
  \frac{\partial m'}{\partial k} & \frac{\partial m'}{\partial m} \\
\end{array}%
\right)=\left(%
\begin{array}{cc}
  b & 0 \\
  0 & b \\
\end{array}%
\right).
\end{eqnarray}
\end{theorem}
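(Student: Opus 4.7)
The plan is to bypass a direct attack on the decimation equation (\ref{RG}) and instead work through the transfer matrix, which is the concrete two-state, one-dimensional incarnation of arrows (II)--(IV) in the diagram (\ref{diagram}). Setting $T(\sigma,\sigma')=\exp\!\bigl(J\sigma\sigma'+\tfrac{m}{2}(\sigma+\sigma')\bigr)$, the right-hand side of (\ref{RG}) is precisely the matrix element $(T^{b})(\sigma_0,\sigma_b)$; the decimation step therefore corresponds to raising the symmetric $2\times 2$ matrix $T$ to the $b$-th power, and $(J',m')$, equivalently $(k',m')$, must be read off by matching the three distinct entries of $T^b$ to $e^{C}T'$, where $T'$ has the same functional form with primed parameters.

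To handle the singularity of $T$ as $J\to\infty$, I would introduce $k=e^{-2J}$ at the outset and factor $T=k^{-1/2}\tilde T$, where
\begin{equation*}
\tilde T(k,m)=\begin{pmatrix}e^{m}&k\\ k&e^{-m}\end{pmatrix}.
\end{equation*}
Then $T^b=k^{-b/2}\tilde T^{b}$, and the matching condition $T^b=e^{C}T'$ reduces to $\tilde T^{b}=\mu\tilde T'$ for a positive scalar $\mu$ that absorbs both $C$ and the $k$-powers. Writing $A,D$ for the diagonal entries of $\tilde T^{b}$ and $B$ for the off-diagonal entry, elementary algebra yields the explicit inverse map
\begin{equation*}
k'=\frac{B}{\sqrt{AD}},\qquad m'=\tfrac{1}{2}\log\frac{A}{D}.
\end{equation*}
The problem has been reduced to a single matrix power.

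Near the fixed point $\tilde T(0,0)=I$, so $\tilde T^{b}(0,0)=I$, which already confirms that $(k,m)=(0,0)$ is fixed. For the Jacobian I would Taylor-expand
\begin{equation*}
\tilde T=I+\begin{pmatrix}m&k\\ k&-m\end{pmatrix}+O(k^2+m^2),
\end{equation*}
which gives $\tilde T^{b}=I+b\begin{pmatrix}m&k\\ k&-m\end{pmatrix}$ to first order in $(k,m)$. Substituting into the formulas for $k'$ and $m'$ above immediately yields $k'=bk$ and $m'=bm$ modulo higher order, which is exactly the Jacobian $\textsf{Jac}=b\,I$ claimed in (\ref{J}).

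The main pitfall I would watch is the divergent $k^{-1/2}$ prefactor: a naive perturbation of $T$ or $T^b$ about the fixed point is ill-defined, and failing to absorb this divergence into the normalization constant $C$ of (\ref{RG}) would generate spurious contributions to the Jacobian. Once the renormalization is properly separated off, the equality of the two eigenvalues (and hence $y_T=y_H=1$) is essentially forced by the clean splitting of the linear perturbation of $\tilde T$ into a diagonal ($m$-driven) piece and an off-diagonal ($k$-driven) piece: these decouple at first order in the matrix power, so the map $(k,m)\mapsto(k',m')$ is automatically diagonal with a single common scaling factor $b$.
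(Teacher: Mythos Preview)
Your transfer-matrix argument is correct: the factorization $T=k^{-1/2}\tilde T$ is exactly the right move to tame the $J\to\infty$ divergence, and once $\tilde T(0,0)=I$ the first-order expansion $\tilde T^{b}=I+bX$ with $X=\bigl(\begin{smallmatrix}m&k\\k&-m\end{smallmatrix}\bigr)$ immediately gives $k'=bk$, $m'=bm$ via your inversion formulas $k'=B/\sqrt{AD}$ and $m'=\tfrac12\log(A/D)$.

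Your route is conceptually aligned with the paper's second proof but operationally distinct. The paper realizes arrows (II) and (IV) of (\ref{diagram}) by passing explicitly through the Markov transition probabilities $(p,q)$, using Baxter's mean/covariance identities to set up the dictionary (\ref{p})--(\ref{m}); it then diagonalizes the stochastic matrix $P$ rather than the transfer matrix $T$, and computes the four Jacobian entries by restricting separately to the curves $m=0$ and $k=0$ and chaining partial derivatives through $(p,q)$. You bypass the $(p,q)$ detour entirely, working directly with $\tilde T^{b}$ and a single Taylor expansion about the identity; this is more economical and makes the diagonal/off-diagonal decoupling---hence the vanishing of the mixed partials---immediately visible. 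What the paper's Markov-chain language buys in exchange is a closed-form expression for $(k',m')$ valid away from the fixed point (Theorem~\ref{char}) and a probabilistic framework it hopes to extend to higher-dimensional Potts models. The paper's \emph{first} proof is entirely different from yours: it attacks (\ref{RG}) by implicit differentiation and a combinatorial count of the spin configurations maximizing expressions like $\sigma_1+\sum_{i=1}^{b-2}\sigma_i\sigma_{i+1}\pm\sigma_{b-1}$ as $J\to\infty$, an argument your matrix-power expansion replaces wholesale.
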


The rest of this paper is organized as follows. In Section
\ref{renormalization} we verify the universality conjecture by
analyzing the decimation map on the Ising Hamiltonian directly
(First Proof of Theorem \ref{prop}). In Section \ref{markov} the
statistical physics model is transformed into a probability model
through the Markov-Gibbs equivalence (Theorems \ref{model} and
\ref{rmodel}). We investigate the decimation effect on the
associated Markov chains and give an explicit characterization of
the renormalized Hamiltonian (Theorem \ref{char}). An alternative
proof of the universality conjecture from this point of view is
also provided (Second Proof of Theorem \ref{prop}). Finally,
Section \ref{conclusion} is devoted to concluding remarks.

\section{Renormalization group approach}
\label{renormalization} In this section we will examine the
renormalization group equation (\ref{RG}) directly. Although it is
difficult to find an explicit solution to (\ref{RG}) for a large
blocking factor $b$, the Jacobian matrix of partial derivatives
(\ref{J}) may be computed via implicit differentiation.

\vskip.1truein

\noindent \textit{First Proof of Theorem \ref{prop}.} The
decimation map (\ref{RG}) consists of 4 equations.

\noindent 1. Corresponding to $\sigma_0=\sigma_b=1$:
\begin{equation*}
\exp\left(C+J'+m'\right)
\end{equation*}
\begin{equation}
\label{1}
=\sum_{\sigma_1,...,\sigma_{b-1}}\exp\left(J\left(\sigma_1+\sum_{i=1}^{b-2}\sigma_i\sigma_{i+1}+\sigma_{b-1}\right)+m\sum_{i=1}^{b-1}
\sigma_i+m\right).
\end{equation}
2. Corresponding to $\sigma_0=\sigma_b=-1$:
\begin{equation*}
\exp\left(C+J'-m'\right)
\end{equation*}
\begin{equation}
\label{2}
=\sum_{\sigma_1,...,\sigma_{b-1}}\exp\left(J\left(-\sigma_1+\sum_{i=1}^{b-2}\sigma_i\sigma_{i+1}-\sigma_{b-1}\right)+m\sum_{i=1}^{b-1}
\sigma_i-m\right).
\end{equation}
3. Corresponding to $\sigma_0=1,\sigma_b=-1$:
\begin{equation*}
\exp\left(C-J'\right)
\end{equation*}
\begin{equation}
\label{3}
=\sum_{\sigma_1,...,\sigma_{b-1}}\exp\left(J\left(\sigma_1+\sum_{i=1}^{b-2}\sigma_i\sigma_{i+1}-\sigma_{b-1}\right)+m\sum_{i=1}^{b-1}
\sigma_i\right).
\end{equation}
4. Corresponding to $\sigma_0=-1,\sigma_b=1$:
\begin{equation*}
\exp\left(C-J'\right)
\end{equation*}
\begin{equation}
\label{4}
=\sum_{\sigma_1,...,\sigma_{b-1}}\exp\left(J\left(-\sigma_1+\sum_{i=1}^{b-2}\sigma_i\sigma_{i+1}+\sigma_{b-1}\right)+m\sum_{i=1}^{b-1}
\sigma_i\right).
\end{equation}

Due to symmetry, (\ref{3}) and (\ref{4}) are equivalent. We may
therefore assume that (\ref{RG}) breaks down into 3 equations:
(\ref{1}), (\ref{2}), and (\ref{3}). To compute the Jacobian
matrix of the decimation transformation at the fixed point
($k=m=0$), we perform implicit differentiation on these equations
at ($J=\infty$, $m=0$). As an example, we differentiate both sides
of (\ref{1}) with respect to $m$, which gives
\begin{equation*}
\frac{\partial C}{\partial m}+\frac{\partial J'}{\partial
m}+\frac{\partial m'}{\partial m}-1
\end{equation*}
\begin{eqnarray}
=\frac{\sum
\limits_{\sigma_1,...,\sigma_{b-1}}\exp\left(J\left(\sigma_1+\sum
\limits_{i=1}^{b-2}\sigma_i\sigma_{i+1} +\sigma_{b-1}\right)+m
\sum \limits_{i=1}^{b-1}\sigma_i\right) \left(\sum
\limits_{i=1}^{b-1}\sigma_i\right)}{\sum
\limits_{\sigma_1,...,\sigma_{b-1}} \exp\left(J\left(\sigma_1+\sum
\limits_{i=1}^{b-2}\sigma_i\sigma_{i+1}+\sigma_{b-1}\right)+m \sum
\limits_{i=1}^{b-1}\sigma_i\right)}.
\end{eqnarray}
Because lower order terms become insignificant at ``$J=\infty$'',
it suffices to keep track of the ``dominating terms''. We have
\begin{equation*}
\frac{\partial C}{\partial m}+\frac{\partial J'}{\partial
m}+\frac{\partial m'}{\partial m}-1
\end{equation*}
\begin{equation}
=\frac{s\left(\sum_{i=1}^{b-1}\sigma_i|\text{max}\left(\sigma_1+\sum_{i=1}^{b-2}\sigma_i\sigma_{i+1}+\sigma_{b-1}\right)\right)}
{n\left(\sigma_1+\sum_{i=1}^{b-2}\sigma_i\sigma_{i+1}+\sigma_{b-1}\right)},
\end{equation}
where $n(f)$ counts the number of $\sigma$ configurations that
maximize $f(\sigma)$, and $s(g|\text{max}(f))$ (which we will
abbreviate by $s(g)$) is the sum of $g(\sigma)$ over the
maximizers of $f(\sigma)$. Repeating this ``dominating'' procedure
provides us with 6 independent equations for the partial
derivatives:
\begin{equation}
\label{first} \frac{\partial C}{\partial J}+\frac{\partial
J'}{\partial J}+\frac{\partial m'}{\partial
J}=\frac{s(\sigma_1+\sum_{i=1}^{b-2}\sigma_i\sigma_{i+1}+\sigma_{b-1})}{n(\sigma_1+\sum_{i=1}^{b-2}\sigma_i\sigma_{i+1}+\sigma_{b-1})},
\end{equation}
\begin{equation}
\frac{\partial C}{\partial m}+\frac{\partial J'}{\partial
m}+\frac{\partial m'}{\partial
m}=\frac{s(\sum_{i=1}^{b-1}\sigma_i)}{n(\sigma_1+\sum_{i=1}^{b-2}\sigma_i\sigma_{i+1}+\sigma_{b-1})}+1,
\end{equation}
\begin{equation}
\frac{\partial C}{\partial J}+\frac{\partial J'}{\partial
J}-\frac{\partial m'}{\partial
J}=\frac{s(-\sigma_1+\sum_{i=1}^{b-2}\sigma_i\sigma_{i+1}-\sigma_{b-1})}{n(-\sigma_1+\sum_{i=1}^{b-2}\sigma_i\sigma_{i+1}-\sigma_{b-1})},
\end{equation}
\begin{equation}
\frac{\partial C}{\partial m}+\frac{\partial J'}{\partial
m}-\frac{\partial m'}{\partial m}=\frac{s(\sum_{i=1}^{b-1}
\sigma_i)}{n(-\sigma_1+\sum_{i=1}^{b-2}\sigma_i\sigma_{i+1}-\sigma_{b-1})}-1,
\end{equation}
\begin{equation}
\frac{\partial C}{\partial J}-\frac{\partial J'}{\partial
J}=\frac{s(\sigma_1+\sum_{i=1}^{b-2}\sigma_i\sigma_{i+1}-\sigma_{b-1})}{n(\sigma_1+\sum_{i=1}^{b-2}\sigma_i\sigma_{i+1}-\sigma_{b-1})},
\end{equation}
\begin{equation}
\label{last} \frac{\partial C}{\partial m}-\frac{\partial
J'}{\partial m}=\frac{s(\sum_{i=1}^{b-1}
\sigma_i)}{n(\sigma_1+\sum_{i=1}^{b-2}\sigma_i\sigma_{i+1}-\sigma_{b-1})}.
\end{equation}

It is quite clear that
$\max(\sigma_1+\sum_{i=1}^{b-2}\sigma_i\sigma_{i+1}+\sigma_{b-1})=b$
is achieved only when $\sigma_1=\cdots=\sigma_{b-1}=1$, and that
$\max(-\sigma_1+\sum_{i=1}^{b-2}\sigma_i\sigma_{i+1}-\sigma_{b-1})=b$
is achieved only when $\sigma_1=\cdots=\sigma_{b-1}=-1$. The
harder task it to determine when
$\max(\sigma_1+\sum_{i=1}^{b-2}\sigma_i\sigma_{i+1}-\sigma_{b-1})$
is obtained. Because of the single ``$-$'' sign, it is not
possible for all the $b$ terms in this sum ($\sigma_1,
\sigma_1\sigma_2,...,\sigma_{b-2}\sigma_{b-1}, -\sigma_{b-1}$) to
be $1$ as in the previous two cases. An ideal maximizer should
have $b-1$ terms with value $1$ and only one term with value $-1$.
We claim that each one of the $b$ locations of $-1$ corresponds to
exactly one $\sigma$ configuration: Suppose the $i$th term has
value $-1$ ($\sigma_1=-1$ for $i=1$, $\sigma_{i-1}\sigma_i=-1$ for
$2\leq i\leq b-1$, or $\sigma_{b-1}=1$ for $i=b$), then we must
have $\sigma_1=\cdots=\sigma_{i-1}=1$ and
$\sigma_i=\cdots=\sigma_{b-1}=-1$. (\ref{first})---(\ref{last})
are thus simplified:
\begin{equation}
\label{one} \frac{\partial C}{\partial J}+\frac{\partial
J'}{\partial J}+\frac{\partial m'}{\partial J}=b, \hspace{1cm}
\frac{\partial C}{\partial m}+\frac{\partial J'}{\partial
m}+\frac{\partial m'}{\partial m}=b,
\end{equation}
\begin{equation}
\frac{\partial C}{\partial J}+\frac{\partial J'}{\partial
J}-\frac{\partial m'}{\partial J}=b, \hspace{1cm} \frac{\partial
C}{\partial m}+\frac{\partial J'}{\partial m}-\frac{\partial
m'}{\partial m}=-b,
\end{equation}
\begin{equation}
\label{three} \frac{\partial C}{\partial J}-\frac{\partial
J'}{\partial J}=b-2, \hspace{1cm} \frac{\partial C}{\partial
m}-\frac{\partial J'}{\partial m}=0.
\end{equation}

Solving (\ref{one})---(\ref{three}) yields
\begin{equation}
\frac{\partial J'}{\partial J}=1, \hspace{1cm} \frac{\partial
J'}{\partial m}=\frac{\partial m'}{\partial J}=0, \hspace{1cm}
\frac{\partial m'}{\partial m}=b,
\end{equation}
which further implies that the Jacobian matrix $\textsf{Jac}$
(\ref{J}) is diagonal, i.e., $\frac{\partial k'}{\partial
m}=\frac{\partial m'}{\partial k}=0$. To complete the proof of the
universality conjecture, it remains to verify that $\frac{\partial
k'}{\partial k}=b$. We perform the ``dominating'' procedure as
before. For notational convenience, we temporarily denote
$\sigma_1+\sum_{i=1}^{b-2}\sigma_i\sigma_{i+1}-\sigma_{b-1}$ by
$f(\sigma)$, and
$\sigma_1+\sum_{i=1}^{b-2}\sigma_i\sigma_{i+1}+\sigma_{b-1}$ by
$g(\sigma)$. Dividing (\ref{3}) by (\ref{1}) at the fixed point
$(J=\infty, m=0)$, we have
\begin{equation*}
k'=\exp(-2J')=\frac{n(f)\exp\left(J\cdot\text{max}(f)\right)}{n(g)\exp\left(J\cdot\text{max}(g)\right)}
\end{equation*}
\begin{equation}
=\frac{b\exp\left((b-2)J\right)}{\exp(bJ)}=b\exp(-2J)=bk.
\end{equation} \qed

\section{Markov chain approach}
\label{markov} In this section we will transform the statistical
physics model into a probability model and investigate the
decimation effect on the associated Markov chains. This is a
special case of Hammersley-Clifford theorem where the exact
correspondence between the Markov field and the Gibbs field may be
worked out explicitly. The idea is to regard the Ising system as a
two-state Markov chain with transition probability matrix
\begin{equation}
P=\left(%
\begin{array}{cc}
  1-p & p \\
  q & 1-q \\
\end{array}%
\right),
\end{equation}
where
\begin{equation}
p=\PR(\sigma_1=1|\sigma_0=-1),
\end{equation}
and
\begin{equation}
q=\PR(\sigma_1=-1|\sigma_0=1).
\end{equation}

\begin{theorem}[Hammersley-Clifford]
\label{model} The Ising Hamiltonian $H$ (\ref{H}) is fully
characterized by the transition probabilities $p$ and $q$.
\end{theorem}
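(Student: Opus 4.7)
The plan is to compute the joint distribution of two consecutive spins in two different ways — once from the transition matrix $P$ using the stationary distribution of the chain, and once from the Gibbs weight of $H$ — and then to invert the resulting system of equations to recover $J$ and $m$ explicitly from $p$ and $q$.

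First I would solve $\pi P=\pi$ to obtain the stationary distribution $\pi_+=p/(p+q)$, $\pi_-=q/(p+q)$, whence the two-spin joint mass $\pi_{\sigma_i}P(\sigma_{i+1}\mid\sigma_i)$ takes the values $p(1-q)/(p+q)$, $pq/(p+q)$, $qp/(p+q)$, $q(1-p)/(p+q)$ on the four configurations $(\sigma_i,\sigma_{i+1})=(\pm,\pm)$. On the Gibbs side, translation invariance and the nearest-neighbor structure of \eqref{H} force the two-spin marginal to be proportional to $\exp\!\bigl(J\sigma_i\sigma_{i+1}+\tfrac{m}{2}(\sigma_i+\sigma_{i+1})\bigr)$. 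Matching the two expressions cancels the common normalization in the ratios
\[
\frac{1-q}{q}=e^{2J+m},\qquad \frac{1-p}{p}=e^{2J-m},
\]
so that adding and subtracting logarithms yields
\[
J=\tfrac14\ln\frac{(1-p)(1-q)}{pq},\qquad m=\tfrac12\ln\frac{p(1-q)}{q(1-p)},
\]
which is well defined and finite for every $(p,q)\in(0,1)^2$. In particular $(J,m)$ is a single-valued function of $(p,q)$, and conversely $(p,q)$ is manifestly determined by $(J,m)$ through the transfer-matrix computation above.

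The main subtlety, rather than the calculation itself, will be justifying that matching only the two-spin marginal is enough to pin down the Hamiltonian. This, however, rests on facts already in hand: the Gibbs field of \eqref{H} has clique potentials of range at most two and so is determined by its nearest-neighbor joint law, while the stationary two-state chain is determined by $P$, which is recovered from the two-spin marginal once stationarity is imposed. The automatic equality of the $(+,-)$ and $(-,+)$ joint probabilities — i.e.\ $\pi_+q=\pi_-p$ — appears inside the system as the detailed-balance consistency condition and is the reassuring check that the Markov and Gibbs descriptions truly coincide on the whole chain, as anticipated by the Hammersley--Clifford theorem discussed in Section~\ref{intro}.
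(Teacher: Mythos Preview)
Your argument has a genuine gap. The claim that the two-spin marginal of the Gibbs measure of \eqref{H} is proportional to $\exp\bigl(J\sigma_i\sigma_{i+1}+\tfrac{m}{2}(\sigma_i+\sigma_{i+1})\bigr)$ is false when $m\neq 0$. Integrating out the remaining $N-2$ spins contributes a factor coming from the Perron eigenvector $v_+$ of the transfer matrix $T$: the correct infinite-volume marginal is proportional to $T(\sigma_i,\sigma_{i+1})\,v_+(\sigma_i)v_+(\sigma_{i+1})$, or equivalently the one-step conditionals are $P(\sigma_{i+1}\mid\sigma_i)=T(\sigma_i,\sigma_{i+1})v_+(\sigma_{i+1})/\bigl(\lambda_+v_+(\sigma_i)\bigr)$. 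The eigenvector factors cancel in the cross-ratio $P(+,+)P(-,-)/P(+,-)P(-,+)$, which is why your formula $J=\tfrac14\ln\tfrac{(1-p)(1-q)}{pq}$ happens to agree with the paper's \eqref{JJ}; but they do \emph{not} cancel in $P(+,+)/P(+,-)$, so your equation $\tfrac{1-q}{q}=e^{2J+m}$ and hence your formula $m=\tfrac12\ln\tfrac{p(1-q)}{q(1-p)}$ are wrong. A quick sanity check: let $k\to 0$ with $m>0$ fixed; then by \eqref{p}--\eqref{q} one has $q\to 0$ while $p\to 1-e^{-2m}\in(0,1)$, and your expression for $m$ diverges. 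The correct relation, from $(1-q)/(1-p)=(e^{J+m}/\lambda_+)/(e^{J-m}/\lambda_+)=e^{2m}$, is the paper's \eqref{m}.

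The paper sidesteps this pitfall by matching not the raw pair weight but the invariants $\E\sigma_0$ and $\Cov(\sigma_0,\sigma_1)$, for which Baxter's transfer-matrix computation already incorporates the eigenvector contribution on the Gibbs side; equating those with the Markov-chain expressions $\tfrac{p-q}{p+q}$ and $(1-p-q)\tfrac{4pq}{(p+q)^2}$ then yields \eqref{p}--\eqref{m}. Your route is repairable---either insert the $v_+$ factor explicitly, or take only ratios in which it cancels (for instance $(1-q)/(1-p)$ rather than $(1-q)/q$)---but as written the step ``nearest-neighbor structure forces the two-spin marginal to be proportional to the pair Boltzmann weight'' is precisely the missing idea.
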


\begin{remark}
The transition probabilities $p$ and $q$ and the spin coefficients
$k$ and $m$ are related by (\ref{p}), (\ref{q}), (\ref{JJ}), and
(\ref{m}). The spin coefficients fixed point ($k=m=0$) thus
corresponds to the transition probabilities fixed point ($p=q=0$).
\end{remark}

\begin{proof}
Baxter \cite{Baxter} showed that the mean and covariance of the
Ising spins in the infinite-volume limit are functions of the spin
coefficients $k$ and $m$:
\begin{equation}
\label{E} \E\sigma_0=\frac{\sinh m}{\sqrt{\sinh^2 m+k^2}},
\end{equation}
\begin{equation}
\label{cov} \Cov(\sigma_0, \sigma_1)=\frac{k^2}{\sinh^2
m+k^2}\frac{\cosh m - \sqrt{\sinh^2 m+k^2}}{\cosh m +
\sqrt{\sinh^2 m+k^2}}.
\end{equation}
Through the Markov-Gibbs equivalence, we show that (\ref{E}) and
(\ref{cov}) may alternatively be viewed as functions of the
transition probabilities $p$ and $q$. Recall that the Markov chain
has a stationary distribution:
\begin{equation}
\PR(\sigma_0=-1)=\frac{q}{p+q}, \hspace{1cm}
\PR(\sigma_0=1)=\frac{p}{p+q},
\end{equation}
which readily displays the dependence of the mean on the
transition probabilities,
\begin{equation}
\E\sigma_0=\E\sigma_1=\frac{p-q}{p+q}.
\end{equation}
To obtain an analogous expression for the covariance, we resort to
the tower property of conditional expectation,
\begin{equation*}
\E\sigma_0\sigma_1=\E(\sigma_0\E(\sigma_1|\sigma_0))
\end{equation*}
\begin{equation*}
=\PR(\sigma_1=1|\sigma_0=1)\PR(\sigma_0=1)-\PR(\sigma_1=-1|\sigma_0=1)\PR(\sigma_0=1)
\end{equation*}
\begin{equation*}
-\PR(\sigma_1=1|\sigma_0=-1)\PR(\sigma_0=-1)+\PR(\sigma_1=-1|\sigma_0=-1)\PR(\sigma_0=-1)
\end{equation*}
\begin{equation}
=\frac{(p-q)^2}{(p+q)^2}+(1-p-q)\frac{4pq}{(p+q)^2},
\end{equation}
which then gives
\begin{equation}
\Cov(\sigma_0,
\sigma_1)=\E\sigma_0\sigma_1-\E\sigma_0\E\sigma_1=(1-p-q)\frac{4pq}{(p+q)^2}.
\end{equation}
The two characterizations of the Ising Hamiltonian $H$ (\ref{H})
are thus connected by:
\begin{equation}
\label{char1} \frac{p-q}{p+q}=\frac{\sinh m}{\sqrt{\sinh^2
m+k^2}},
\end{equation}
\begin{equation}
\label{char2} (1-p-q)\frac{4pq}{(p+q)^2}=\frac{k^2}{\sinh^2
m+k^2}\frac{\cosh m - \sqrt{\sinh^2 m+k^2}}{\cosh m +
\sqrt{\sinh^2 m+k^2}}.
\end{equation}

It is not hard to derive an explicit expression of $p$ and $q$ in
terms of $k$ and $m$ from (\ref{char1}) and (\ref{char2}):
\begin{equation}
\label{p} p=\frac{\sqrt{\sinh^2 m+k^2}+\sinh m}{\cosh
m+\sqrt{\sinh^2 m+k^2}},
\end{equation}
\begin{equation}
\label{q} q=\frac{\sqrt{\sinh^2 m+k^2}-\sinh m}{\cosh
m+\sqrt{\sinh^2 m+k^2}}.
\end{equation}
The reverse direction, however, requires more work. For
computational convenience, we make a change of variables, $A=\sinh
m$, $B=\sinh^2 m+k^2$. Then (\ref{p}) and (\ref{q}) become
\begin{equation}
\label{1st} \sqrt{B}+A=p\sqrt{A^2+1}+p\sqrt{B},
\end{equation}
\begin{equation}
\label{2nd} \sqrt{B}-A=q\sqrt{A^2+1}+q\sqrt{B}.
\end{equation}
Dividing (\ref{2nd}) into (\ref{1st}), we have
\begin{equation}
\frac{A-p\sqrt{A^2+1}}{-A-q\sqrt{A^2+1}}=\frac{p-1}{q-1}.
\end{equation}
This is an equation for $A$ only, and an explicit expression of
$k$ and $m$ in terms of $p$ and $q$ follows easily:
\begin{equation}
\label{JJ} k=\sqrt{\frac{pq}{(1-p)(1-q)}},
\end{equation}
\begin{equation}
\label{m} m=\frac{1}{2}\log\left(\frac{1-q}{1-p}\right).
\end{equation}
\end{proof}

\begin{theorem}
\label{rmodel} The renormalized Ising Hamiltonian $H'$ (\ref{H'})
is fully characterized by the renormalized transition
probabilities $p'$ and $q'$, where
\begin{equation}
p'=\PR(\sigma_b=1|\sigma_0=-1),
\end{equation}
and
\begin{equation}
q'=\PR(\sigma_b=-1|\sigma_0=1).
\end{equation}
\end{theorem}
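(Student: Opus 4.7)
The plan is to reduce Theorem \ref{rmodel} directly to Theorem \ref{model} by exploiting the fact that the renormalized Hamiltonian $H'$ in (\ref{H'}) has exactly the same nearest-neighbor Ising form as $H$ in (\ref{H}), only supported on the coarse-grained lattice $\{bi : 0 \leq i \leq N/b-1\}$. If the decimated spin sequence $(\sigma_{bi})_i$ can be shown to be itself a stationary two-state Markov chain whose one-step transition probabilities are precisely $p'$ and $q'$, then the argument of Theorem \ref{model} can be replayed verbatim on the coarsened lattice to give the desired characterization.

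First I would verify the Markov property of the subsampled chain. Since $(\sigma_i)_i$ is a stationary two-state Markov chain with transition matrix $P$, sampling at rate $b$ yields a stationary two-state chain whose transition matrix is the matrix power $P^b$. By definition the off-diagonal entries of $P^b$ are
\begin{equation*}
p' = \PR(\sigma_b = 1 \mid \sigma_0 = -1), \qquad q' = \PR(\sigma_b = -1 \mid \sigma_0 = 1),
\end{equation*}
so $(p',q')$ are the genuine one-step transition probabilities of the renormalized chain.

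Next I would mirror the computation in the proof of Theorem \ref{model}. The stationary distribution gives $\PR(\sigma_0=1) = p'/(p'+q')$ and $\PR(\sigma_0=-1) = q'/(p'+q')$, and the tower property applied to $\E(\sigma_0 \sigma_b)$ yields the primed analogues
\begin{equation*}
\E \sigma_0 = \frac{p'-q'}{p'+q'}, \qquad \Cov(\sigma_0, \sigma_b) = (1-p'-q')\,\frac{4 p' q'}{(p'+q')^2}.
\end{equation*}
In parallel, Baxter's formulas (\ref{E}) and (\ref{cov}) apply to $H'$ with $(k, m, \sigma_1)$ replaced by $(k', m', \sigma_b)$, since $H'$ is again a nearest-neighbor Ising Hamiltonian with a uniform magnetic field. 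Equating the two representations of the mean and of the covariance reproduces primed versions of (\ref{char1})--(\ref{char2}), and the same algebraic inversion carried out at the end of the proof of Theorem \ref{model} translates $(p', q')$ back into $(k', m')$ via the primed analogues of (\ref{JJ}) and (\ref{m}).

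The main (and essentially only) conceptual point to justify is that decimation in one dimension does not generate interactions beyond nearest-neighbor on the coarsened lattice, so that Baxter's formulas really do apply to $H'$ in their original form. This is, however, precisely what is built into the decimation map (\ref{RG}), which by construction outputs only a pair $(J', m')$ — no longer-range couplings appear, which is what makes the diagram (\ref{diagram}) close. Once this structural invariance is granted, the proof is a faithful relabeling of the proof of Theorem \ref{model}, with every quantity carrying a prime and the lattice spacing $1$ replaced by $b$.
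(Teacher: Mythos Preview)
Your proposal is correct and follows essentially the same route as the paper: reduce to Theorem \ref{model} by observing that the subsampled sequence $(\sigma_{bi})_i$ is again a stationary two-state Markov chain with transition matrix $P^b$, whose off-diagonal entries are precisely $p'$ and $q'$. The only substantive addition in the paper's proof is that it carries out the spectral decomposition of $P$ to obtain the explicit closed-form expressions $p' = \frac{p}{p+q}\bigl(1-(1-p-q)^b\bigr)$ and $q' = \frac{q}{p+q}\bigl(1-(1-p-q)^b\bigr)$, which you leave implicit but which are needed downstream for the second proof of Theorem \ref{prop}.
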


\begin{remark}
The renormalized transition probabilities $p'$ and $q'$ and the
renormalized spin coefficients $k'$ and $m'$ are similarly related
as in (\ref{p}), (\ref{q}), (\ref{JJ}), and (\ref{m}).
\end{remark}

\begin{proof}
This follows from Theorem \ref{model} once we realize that site
$0$ and site $b$ are nearest neighbors after decimation
transformation with blocking factor $b$. The $b$-step transition
probability matrix $P^b$ represents the decimation map on the
associated Markov chains, and is given by
\begin{equation*}
P^b=\left(\left(%
\begin{array}{cc}
  1 & p \\
  1 & -q \\
\end{array}%
\right)\left(%
\begin{array}{cc}
  1 & 0 \\
  0 & 1-p-q \\
\end{array}%
\right)\left(%
\begin{array}{cc}
  \frac{q}{p+q} & \frac{p}{p+q} \\
  \frac{1}{p+q} & \frac{-1}{p+q} \\
\end{array}%
\right)\right)^b
\end{equation*}
\begin{equation}
=\left(%
\begin{array}{c}
  1 \\
  1 \\
\end{array}%
\right)\left(%
\begin{array}{cc}
  \frac{q}{p+q} & \frac{p}{p+q} \\
\end{array}%
\right)+(1-p-q)^b\left(%
\begin{array}{c}
  p \\
  -q \\
\end{array}%
\right)\left(%
\begin{array}{cc}
  \frac{1}{p+q} & \frac{-1}{p+q} \\
\end{array}%
\right),
\end{equation}
where the first equality is simply the spectral decomposition of
the matrix $P$. This then implies that
\begin{equation}
\label{p'} p'=\frac{p}{p+q}(1-(1-p-q)^b),
\end{equation}
and
\begin{equation}
\label{q'} q'=\frac{q}{p+q}(1-(1-p-q)^b).
\end{equation}
\end{proof}

\begin{theorem}
\label{char} The decimation map (\ref{RG}) identifies the
connection between the renormalized Hamiltonian $H'$ (\ref{H'})
and the original Hamiltonian $H$ (\ref{H}).
\end{theorem}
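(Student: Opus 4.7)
The plan is to chase the commutative square in (\ref{diagram}). Theorem \ref{model} supplies the downward arrow (II), relating $(k,m)$ to $(p,q)$ via (\ref{p})--(\ref{q}); Theorem \ref{rmodel} supplies both the horizontal arrow (III), relating $(p,q)$ to $(p',q')$ via (\ref{p'})--(\ref{q'}), and (through the remark following it) the upward arrow (IV), relating $(p',q')$ back to $(k',m')$. What Theorem \ref{char} asserts is that the composite (II)$\to$(III)$\to$(IV) agrees with the direct horizontal arrow (I) coming from the decimation equation (\ref{RG}). The proof therefore amounts to writing down this composite map explicitly and checking that the output is a renormalized Hamiltonian of the correct form (\ref{H'}).

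Concretely, I would proceed in three steps. First, apply Theorem \ref{rmodel} to obtain $(p',q')$ from $(p,q)$. Second, apply the renormalized-level analogs of (\ref{JJ})--(\ref{m}) to recover $(k',m')$ from $(p',q')$. Writing $r=(1-p-q)^b$ for brevity, elementary algebra yields
\begin{equation*}
1-p'=\frac{q+pr}{p+q}, \qquad 1-q'=\frac{p+qr}{p+q}, \qquad p'q'=\frac{pq(1-r)^2}{(p+q)^2},
\end{equation*}
so that
\begin{equation*}
k'=\frac{(1-r)\sqrt{pq}}{\sqrt{(q+pr)(p+qr)}}, \qquad m'=\frac{1}{2}\log\frac{p+qr}{q+pr}.
\end{equation*}
Third, substitute (\ref{p})--(\ref{q}) to re-express these as functions of $(k,m)$ alone, giving a fully explicit formula for the decimation map that bypasses (\ref{RG}) entirely.

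The main obstacle is not computational but conceptual: one must justify that the two routes around the diagram really do coincide, i.e.\ that decimating at the Hamiltonian level (summing out $\sigma_1,\ldots,\sigma_{b-1}$ in the Gibbs weight) produces the same pair-marginal on $(\sigma_0,\sigma_b)$ as iterating the transition matrix $b$ times. But by the Markov--Gibbs equivalence already established in Theorems \ref{model} and \ref{rmodel}, both marginals are completely specified by the same two invariants $\E\sigma_0$ and $\Cov(\sigma_0,\sigma_b)$. Hence the commutation of the diagram is automatic, and the proof reduces to recording the explicit formulas above together with the invariants (\ref{E})--(\ref{cov}) evaluated at the renormalized coefficients $(k',m')$.
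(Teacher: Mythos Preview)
Your proposal is correct and follows the same route as the paper: realize the decimation map (I) as the composite (II)$\to$(III)$\to$(IV) through the Markov--Gibbs equivalence. The paper's own proof is even terser---it merely records the chain $(k,m)\to(p,q)\to(p',q')\to(k',m')$ without writing out the explicit formulas for $k'$ and $m'$ or the commutativity justification that you supply.
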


\begin{proof}
We follow (II), (III), and (IV) as shown in (\ref{diagram}). The
original Ising model is described by a Hamiltonian $H$ with spin
coefficients $k$ and $m$. (II) indicates the alternative view of
this system as a two-state Markov chain with transition
probabilities $p$ and $q$ (cf. (\ref{p}) and (\ref{q})). (III)
then transforms this Markov chain into a renormalized Markov chain
with renormalized transition probabilities $p'$ and $q'$ (cf.
(\ref{p'}) and (\ref{q'})). Finally, (IV) recovers the
renormalized spin coefficients $k'$ and $m'$ of the renormalized
Hamiltonian $H'$ (cf. (\ref{JJ}) and (\ref{m})).
\begin{equation}
\label{rel}
\begin{array}{ccccccc}
  (k, m) & \overlim{(\text{II})} & (p, q) & \overlim{(\text{III})} & (p', q') & \overlim{(\text{IV})} & (k', m')\\
\end{array}
\end{equation}
\end{proof}

\noindent \textit{Second Proof of Theorem \ref{prop}.} Theorem
\ref{char} establishes an explicit expression of the renormalized
spin coefficients $k'$ and $m'$ in terms of the original spin
coefficients $k$ and $m$ (cf. (\ref{rel})). To evaluate the
Jacobian matrix $\textsf{Jac}$ (\ref{J}) at the fixed point
$(k=m=0)$, we start by considering $\frac{\partial k'}{\partial
k}$ and $\frac{\partial m'}{\partial k}$ with $m$ held fixed at
zero. By (II), on the $m=0$ curve,
\begin{equation}
\label{newp} p=q=\frac{k}{1+k}.
\end{equation}
(III) then gives
\begin{equation}
\label{newp'} p'=q'=\frac{1}{2}\left(1-(1-2p)^b\right),
\end{equation}
which further implies, by (IV), that
\begin{equation}
\label{newk'} k'=\frac{p'}{1-p'},
\end{equation}
\begin{equation}
\label{friday}m'=0.
\end{equation}
We conclude that $\frac{\partial m'}{\partial k}=0$ from
(\ref{friday}), and by applying the chain rule to (\ref{newp}),
(\ref{newp'}), and (\ref{newk'}), that $\frac{\partial
k'}{\partial k}=b$.

We proceed with the calculations for $\frac{\partial m'}{\partial
m}$ and $\frac{\partial k'}{\partial m}$ with $k$ held fixed at
zero. By (II), on the $k=0$ curve, either $p$ or $q$ is zero,
depending on the sign of $m$. Without loss of generality, assume
$m\geq 0$. In this case,
\begin{equation}
\label{anotherp} p=\frac{2\sinh m}{\cosh m+\sinh m},
\end{equation}
\begin{equation}
q=0.
\end{equation}
(III) then gives
\begin{equation}
\label{anotherp'} p'=1-(1-p)^b,
\end{equation}
\begin{equation}
q'=0,
\end{equation}
which further implies, by (IV), that
\begin{equation}
\label{saturday} k'=0,
\end{equation}
\begin{equation}
\label{anotherm'} m'=-\frac{1}{2}\log(1-p').
\end{equation}
We conclude that $\frac{\partial k'}{\partial m}=0$ from
(\ref{saturday}), and by applying the chain rule to
(\ref{anotherp}), (\ref{anotherp'}), and (\ref{anotherm'}), that
$\frac{\partial m'}{\partial m}=b$. \qed

\section{Concluding remarks}
\label{conclusion} This paper aims at an explicit characterization
of the renormalized Hamiltonian after decimation transformation of
a one-dimensional Ising-type Hamiltonian with a nearest-neighbor
interaction and a magnetic field term. We transform the
statistical physics model into a probability model through the
Markov-Gibbs equivalence and analyze the decimation effect on the
associated Markov chains. As the Ising model is a prototype for a
wide variety of spin models, it is expected that the exploitation
of Markov-Gibbs equivalence in this special case will shed light
on the application of renormalization group ideas in a more
general setting. Two different proofs of the universality
conjecture are presented, one based directly upon the
renormalization group equation, and the other from the Markov
chain point of view. Although the first proof does not employ
advanced mathematical methods, it provides a new perspective on
the renormalization flow. For example, it has been verified,
following similar ideas, that one-dimensional $q$-state Potts
model ($q\geq 2$) exhibits the same eigenvalue statistics $y_T$
and $y_H$, independent of the number of states $q$ and the
blocking factor $b$. (The percolation limit $q\to 1$, however,
remains open, and is believed to display different critical
features.) The second proof uses ideas from Markov chains, and is
expected to work with higher-dimensional $q$-state Potts models as
well, where the covariant matrices may be expressed in terms of
the random cluster representation of Fortuin and Kasteleyn
\cite{FK}. As the number of dimensions $d$ and the number of
states $q$ get large, it will be harder to write down exact
formulas for the transition probabilities in the covariant
matrices, but the Metropolis and Glauber algorithms should provide
a reasonable approximation scheme. Since Markov chains may take
both discrete and continuous values, an advantage of exploring
this second perspective is that we can also consider decimation
with spin scaling applied to continuous spin systems, not just
discrete systems like Potts and Ising models, and hence avoid the
lack of spin rescaling, a common problem encountered in a ``pure''
decimation. In summary, we hope this rigorous investigation will
provide insight into the intrinsic structure of the
renormalization group transformation and help us better understand
the nature of universality.

\section*{Acknowledgements}
The author owes deep gratitude to her PhD advisor Bill Faris for
his continued help and support. She appreciated the opportunity to
talk about an early version of this work in the 2010 Arizona
School of Analysis with Applications, organized by Bob Sims and
Daniel Ueltschi. This research was supported in part by the R. H.
Bing Fellowship at University of Texas at Austin.


\begin{thebibliography}{13}
\bibitem{Honerkamp} J. Honerkamp, Statistical Physics: An Advanced Approach with
Applications, Springer, Berlin, 2002.

\bibitem{Dobrushin} P. Dobrushin, The description of a random field by means of conditional probabilities and conditions of its regularity,
Theory Probab. Appl. 13 (1968) 197-224.

\bibitem{Averintsev} M. Averintsev, On a method of describing discrete parameter random
fields, Problemy Pereda\v{c}i Informacii 6 (1970) 100-109.

\bibitem{Spitzer} F. Spitzer, Markov random fields and Gibbs
ensembles, Amer. Math. Monthly 78 (1971) 142-154.

\bibitem{HC} J. Hammersley, P. Clifford, Markov fields on finite graphs and
lattices,\\
http://www.statslab.cam.ac.uk/\char126
grg/books/hammfest/hamm-cliff.pdf (1971).

\bibitem{Sherman} S. Sherman, Markov random fields and Gibbs random
fields, Israel J. Math. 14 (1973) 92-103.

\bibitem{Preston} C. Preston, Generalized Gibbs states and Markov random
fields, Adv. Appl. Probab. 5 (1973) 242-261.

\bibitem{Grimmett} G. Grimmett, A theorem about random fields, Bull. Lond. Math. Soc. 5 (1973)
81-84.

\bibitem{Besag} J. Besag, Spatial interaction and the statistical analysis of lattice systems,
J. Roy. Statist. Soc. Ser. B 36 (1974) 192-236.

\bibitem{Wilson} K. Wilson, The renormalization group: Critical phenomena and the
Kondo problem, Rev. Mod. Phys. 47 (1975) 773-840.

\bibitem{Nelson} D. Nelson, M. Fisher, Soluble renormalization groups and scaling fields for
low-dimensional Ising systems, Ann. Phys. 91 (1975) 226-274.

\bibitem{Nauenberg} M. Nauenberg, Renormalization group solution of the one-dimensional Ising model,
J. Math. Phys. 16 (1975) 703-705.

\bibitem{vanEnter} A.C.D. van Enter, R. Fern\'{a}ndez, A.D. Sokal, Regularity properties and pathologies of
position-space renormalization-group transformations: Scope and
limitations of Gibbsian theory, J. Stat. Phys. 72 (1993) 879-1167.

\bibitem{Baxter} R. Baxter, Exactly Solved Models in Statistical
Mechanics, Academic Press, London, 1982.

\bibitem{FK} C.M. Fortuin, P.W. Kasteleyn, On the random cluster
model: I. Introduction and relation to other models, Physica 57
(1972) 536-564.
\end{thebibliography}
\end{document}